
\documentclass[journal]{IEEEtran}
\usepackage{todonotes}
\usepackage[final]{changes}
%


%

%
\usepackage{cite}

%
\ifCLASSINFOpdf
\else
\fi
%
%

%
\usepackage{amsmath}
\ifCLASSOPTIONcompsoc
 \usepackage[caption=false,font=normalsize,labelfont=sf,textfont=sf]{subfig}
\else
 \usepackage[caption=false,font=footnotesize]{subfig}
\fi
%

\usepackage{graphicx}
\usepackage{amsfonts}
\usepackage{hyperref}
\hypersetup{
    colorlinks=true,
    linkcolor=blue,
    filecolor=magenta,      
    urlcolor=cyan,
    pdftitle={Overleaf Example},
    pdfpagemode=FullScreen,
    }

\usepackage{amsmath}
\usepackage{amssymb}
\usepackage{amsfonts}
\usepackage{mathtools}
\usepackage{amsthm}
\usepackage{multicol}

\newtheorem{theorem}{Theorem}
\newtheorem*{theorem*}{Theorem}
\newtheorem{prop}{Proposition}

\newtheorem{corollary}{Corollary}[prop]
\theoremstyle{definition}
\newtheorem{definition}{Definition}

\hyphenation{}

\begin{document}

\title{Hilbert Transform on Graphs: Let There Be Phase}

\author{Chun Hei Michael~Chan,~\IEEEmembership{Student Member,~IEEE,} Alexandre~Cionca,~\IEEEmembership{Student Member,~IEEE,} ~Dimitri~Van~De~Ville,~\IEEEmembership{Fellow,~IEEE}
        
\thanks{CHM. Chan, A. Cionca and D. Van De Ville are with the Neuro-X Institute, Ecole polytechnique fédérale de Lausanne, and the Department of Radiology and Medical Informatics, University of Geneva, Switzerland. E-mail: chunheimichael.chan@epfl.ch}%
\thanks{This work was supported in part by the Swiss National Science Foundation, Sinergia project ``Precision mapping of electrical brain network dynamics with application to epilepsy'', Grant number 209470.}%
\thanks{Manuscript received XXX; revised XXX.}}

\markboth{Accepted to IEEE Signal Processing Letters (SPL-42226-2025)}%
{}

\maketitle

\begin{abstract}
In the past years, many signal processing operations have been successfully adapted to the graph setting. One elegant and effective approach is to exploit the eigendecomposition of a graph shift operator (GSO), such as the adjacency or Laplacian operator, to define a graph Fourier transform for projecting graph signals on the corresponding basis. However, the extension of this scheme to directed graphs is challenging since the associated GSO is non-symmetric and, in general, not diagonalizable. Here, we build upon a recent framework that adds a minimal number of edges to allow diagonalization of the GSO and thus provide a proper graph Fourier transform. 
Furthermore, we show that such minimal addition of edges creates a cycle cover and that it is essential for the phase analysis of a signal throughout the graph. Concurrently, we propose a generalization of the Hilbert transform interpreted over the newfound cycle cover, which re-establishes intuitions from traditional Hilbert transform, equivalent to the generalized Hilbert transform on a single cycle. This generalization leads to a number of simple and elegant recipes to effectively exploit the phase information of graph signals provided by the graph Fourier transform. The feasibility of the approach is demonstrated on several examples. 
\end{abstract}

\begin{IEEEkeywords}
Graph signal processing, Harmonic analysis, Hilbert transform, Directed graph.
\end{IEEEkeywords}

\IEEEpeerreviewmaketitle

\section{Introduction}
\IEEEPARstart{G}{raph} signal processing (GSP) has emerged as an attractive new methodology for analyzing structured data \cite{shuman_emerging_2013, ortega_graph_2018, marques_signal_2020}. In most cases, an undirected graph is considered, which implies a symmetric adjacency matrix. The definition of the graph Fourier transform (GFT) by the eigendecomposition of a graph shift operator (GSO)~\cite{sandryhaila_discrete_2013}, commonly the adjacency or Laplacian matrix, leads then to real-valued eigenvalues and eigenvectors, and constitutes a unitary transform suitable for the analysis and synthesis of graph signals. Different operations, such as filtering, can then be defined and implemented in the spectral graph domain \cite{sandryhaila_big_2014, chen_signal_2014, xuesong_shi_infinite_2015}. 

For directed graphs, however, the eigendecomposition of the GSO leads to complex-valued eigenvalues and eigenvectors. Moreover the GSO itself becomes non-symmetric and is in general non-diagonalizable. While the Jordan normal form (JNF) has been suggested for providing the spectral decomposition of the GSO~\cite{sandryhaila_discrete_2014}, this comes with several complications. First, the JNF is numerically unstable \cite{cox_computational_1990}, especially for larger graphs. Second, non-trivial Jordan blocks correspond to spectral components with dimension larger than one, which impede GSP operations such as filtering and sampling \cite{shi_graph_2019}. 

Following these two main problems in the numerical application of the JNF in GSP, alternatives have been developed to ensure orthogonality through optimized bases \cite{sardellitti_graph_2017, shafipour_directed_2019}. Both these works introduce directed total variation measures and yield real bases. Moreover, \cite{shafipour_directed_2019} finds bases with uniformly sampled frequency. However, both methods only work on real signals and bypass the GSO, meaning that only spectral operations would be allowed. Another approach proposed is to once again bypass the standard GSO and instead use the diagonalizable Hermitian Laplacian \cite{brefeld_graph_2020}, which leads to a basis that satisfies Parseval property. These alternatives would fulfill properties such as orthogonality, evenly spread frequencies, and most importantly, diagonizability, but turn out to yield real eigenvalues, preventing the immediate definition of phase defined from complex values. 

\added{Recent work allows to obtain both properties of diagonalizability and complex-valued eigendecomposition~\cite{seifert_digraph_2021}. Specifically, by leveraging perturbation theory \cite{moro_low_2003, savchenko_change_2004} on the Jordan blocks, they propose to dismantle non-trivial Jordan blocks that are rank-2 or above, by adding a minimal number of edges to the directed graph. This method enables conventional diagonalization of the modified graph adjacency and thus permits the definition of the GFT as the projection on the dual basis derived from the GSO eigendecomposition. This method also yields a minimally different (spectrally) approximation of the original adjacency matrix \cite{seifert_digraph_2021}. The new adjacency remains a real-valued matrix that has an eigendecomposition with eigenvalues that are either real-valued or pairs of complex conjugates \cite{horn_matrix_1985}. A parallel can thus be drawn between traditional discrete Fourier transform's (DFT) pairs of conjugate frequencies and the pairs of conjugate eigenvalues found in the presented eigendecomposition of the directed graph. A direct application of this property is the Hilbert transform.}

\deleted{Hence in this work we start by considering, as GSO the adjacency matrix \mbox{\cite{sandryhaila_discrete_2013}} representation of the directed graph, which exhibits complex-valued eigenvalues. Secondly, we use recent work \mbox{\cite{seifert_digraph_2021}}, leveraging perturbation theory of Jordan structures \mbox{\cite{moro_low_2003, savchenko_change_2004}}, that proposed to dismantle non-trivial Jordan blocks that is rank-2 or above Jordan blocks while adding a minimal amount of edges to the directed graph. This method enables conventional diagonalization of the modified graph adjacency, and thus permits defining the GFT as the projection on the dual basis derived from the GSO eigendecomposition. This method also proves to yield a minimally different (spectrally) approximation of the original adjacency \mbox{\cite{seifert_digraph_2021}}. The newly generated adjacency remains a real-valued matrix that brings an eigendecomposition with eigenvalues that are either real-valued or pairs of complex conjugates \mbox{\cite{horn_matrix_1985}}. Thus a parallel can be drawn between traditional discrete Fourier transform's (DFT) pairs of conjugate frequencies and the pairs of conjugate eigenvalues found in the presented eigendecomposition of the diagonalizable directed graph. A direct application of this property is the Hilbert transform.}

Prior work from \cite{venkitaraman_hilbert_2019} defines the graph Hilbert transform (GHT) through JNF, and while they show its equivalence to the traditional Hilbert transform, the authors revert to examples where diagonalizability and invertibility are satisfied, thus not giving a method to apply GHT on general graphs. On a non-diagonalizable graph, as we will see, the GHT through JNF does not provide phase information on all parts of the graph signal.

Instead, here we propose using the mentioned minimal graph perturbation to render the GSO diagonalizable and invertible. While \cite{seifert_digraph_2021} only states that this minimal perturbation add cyclicity in the graph, we prove that it creates a cycle cover. Furthermore, we show through a simple example that such a cycle cover is essential for phase analysis throughout the graph. Leveraging this newfound cycle cover, we introduce a more intuitive and familiar definition of the GHT, providing interpretation for instantaneous phase and amplitude on cycles as in traditional HT. Finally we highlight its properties and specifically the extension to the graph analytical signal by insightful examples.

\section{Background}
In what follows, we denote a matrix $\bf X$ as bold uppercase and a vector $\bf x$ as bold lowercase. An element $k$ of a vector will be indicated as ${\bf x}[k]$ and a series of vectors ${\bf x}_m$ will be indexed using subscript. The conjugate will be written as $\cdot^\star$.

Let us consider a directed graph $\mathcal G=(V,E)$, with $V$ being the set of nodes and $E$ the set of edges. The directed graph $\mathcal G$ has $N$ nodes and is also characterized by the (non-symmetric) adjacency matrix ${\bf A}$. In \cite{seifert_digraph_2021}, the perturbed GSO is obtained by adding iteratively rank-1 matrices (i.e., single edges) to $\bf A$. On each iteration, the largest Jordan block is identified by considering the group of the largest number of collinear Jordan eigenvectors. Using \cite[Th.~4, Lem.~12]{seifert_digraph_2021} a candidate edge to dismantle a Jordan block is an edge $(n,m)$  such that $\left|{\bf t}[n]\right| \left|{\bf q}[m]\right|\neq 0$, with $\bf{t}$ and $\bf{q}$ one of the left and right eigenvectors associated to the largest Jordan block, respectively. The same criterion is used on eigenvectors associated to the eigenvalue 0. At the end, the process yields a graph $\mathcal G'=(V,E')$ with associated adjacency ${\bf A}'$ that is guaranteed to be diagonalizable and invertible.

\section{From Graph Phase to Hilbert Transform}
\subsection{Phase in the Graph Fourier Domain}
Let us consider the eigendecomposition of the perturbed adjacency:
$$
  {\bf A'}={\bf U}{\bf \Lambda}{\bf U}^{-1},
$$
where the eigenvalues ${\bf \Lambda}[k,k]=\lambda_k$ are either real-valued with corresponding real-valued eigenvectors ${\bf u}_k$, $k=1,\ldots,N$, that are the columns of ${\bf U}$; or complex-valued, and then come in pairs of complex conjugate eigenvalues and eigenvectors\cite{horn_matrix_1985}. For instance, for a pair of complex-valued eigenvalues with indexes $(k_1,k_2)$, we have $\lambda_{k_1}=\lambda_{k_2}\! ^\star$ with eigenvectors ${\bf u}_{k_1}={\bf u}_{k_2}\! ^\star$. Consequently, the GFT $\hat{\bf x}={\bf U}^{-1}{\bf x}$ of a real-valued graph signal ${\bf x} \in {\mathbb R}^N$ will also have complex conjugate coefficients for these pairs; i.e., $\hat{\bf x}[k_1]=\hat{\bf x}[k_2]^\star$. 

\subsection{Graph Hilbert Transform and Analytical Graph Signal}
We introduce the GHT by defining the following filter in the spectral domain by the diagonal matrix $\hat{\bf H}$: 
\begin{equation}
    \hat{\bf H}[k,k] = \left\{ \begin{array}{ll} -j, & \text{imag}(\lambda_k)>0,\\ +j, & \text{imag}(\lambda_k)<0,\\ 0, & \text{imag}(\lambda_k)=0. \end{array} \right.
\end{equation}
Thus GFT coefficients corresponding to complex-valued adjacency eigenvalues are multiplied with $-j$ and $+j$ according to the sign of the imaginary part of the eigenvalues. A positive/negative imaginary part of the eigenvalue can be interpreted as a positive/negative frequency. As aforementioned, GFT coefficients of a real-valued graph signal are complex conjugate for the pairs of complex eigenvalues. Pairs of conjugate coefficients, when multiplied by $-j$ and $+j$, preserve conjugate relationship leading to a real synthesized signal. Therefore, the Hilbert transform ${\mathcal H({\bf x})}={\bf U}\hat{\bf H}{\bf U}^{-1}{\bf x}$ of such a graph signal is also real-valued.
Subsequently, we can define the equivalent of the analytical signal in traditional Hilbert transform \cite{hahn_hilbert_1996} to graph, namely the analytical graph signal: 
\begin{equation}
  \tilde{\bf x} = {\bf x}+j{\mathcal H({\bf x})}={\bf U} ({\bf I}+j\hat{\bf H})\hat{\bf x}, 
\end{equation}
which can be represented in terms of instantaneous amplitudes and phases: 
\begin{eqnarray*}
    {\mathcal A}({\bf x})[k] & = & \left\|\tilde{\bf x}[k]\right\|, \\
    \varphi({\bf x})[k] & = & \arctan\left( \text{Im}(\tilde{\bf x}[k]), \text{Re}(\tilde{\bf x}[k]) \right).
\end{eqnarray*}
The analytical signal is employed in amplitude and frequency modulation (AM/FM) \cite{proakis_digital_2008}. In a similar fashion, the analytical graph signal sets the ground for graph-wide amplitude and frequency modulation.

\subsection{Interpretation of Graph Hilbert Transform}
In traditional signal processing, the Hilbert transform of a signal provides the magnitude of its envelope and phase of its oscillatory pattern. This interpretation directly applies to a directed cycle graph, which is a model for a 1-D discrete signal with periodic boundary conditions. For a more general graph, we need to acknowledge its structure in terms of subcycles. The described method of minimal perturbation allows us to infer the existence of these subcycles, where each one satisfies periodic boundary conditions. With the following Proposition and Corollary, we put forward a generalized interpretation of the GHT instantaneous amplitude and phase.

\begin{figure}
    \centering
    \captionsetup{justification=centering}
  \subfloat[\label{2a}]{%
        \includegraphics[width=0.39\linewidth]{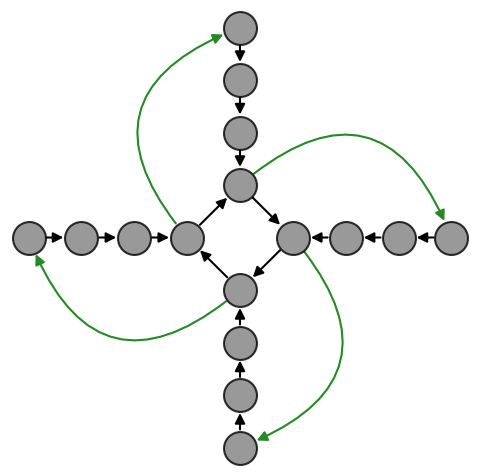}}
  \subfloat[\label{2b}]{%
        \includegraphics[width=0.39\linewidth]{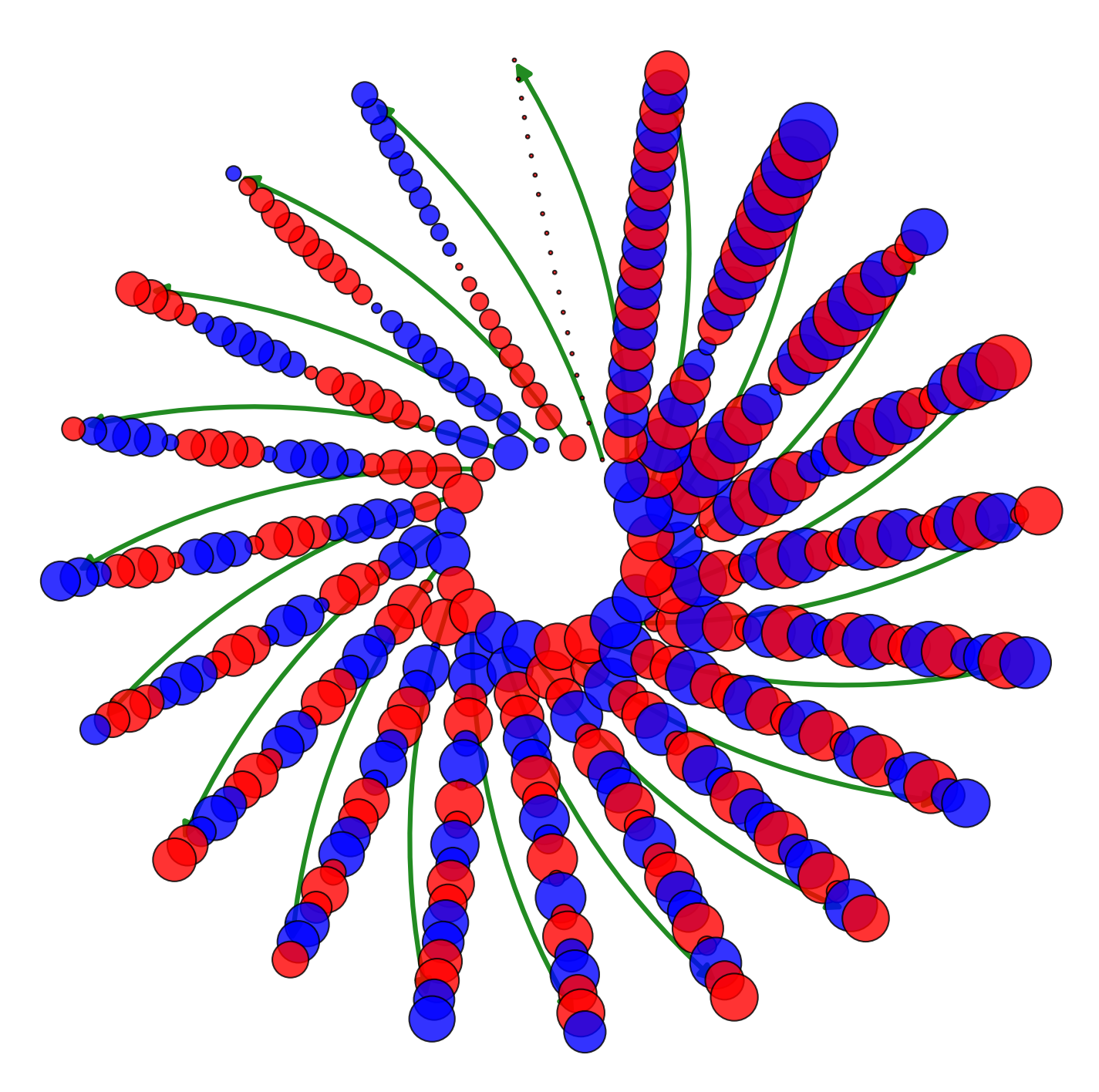}}\\
    \subfloat[\label{cyclic-ght}]{%
       \includegraphics[width=0.39\linewidth]{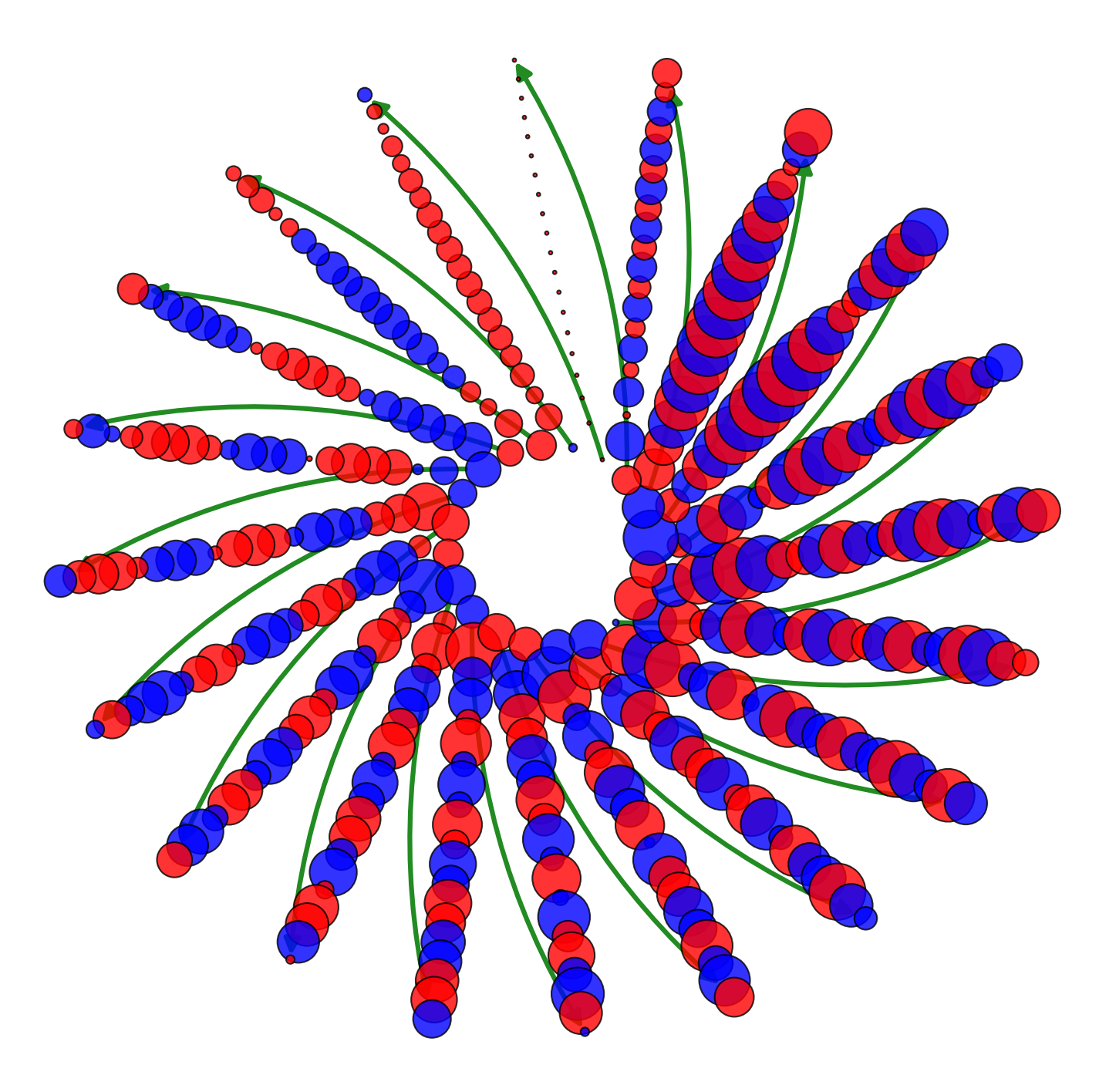}}
    \subfloat[\label{jordan-ght}]{%
       \includegraphics[width=0.39\linewidth]{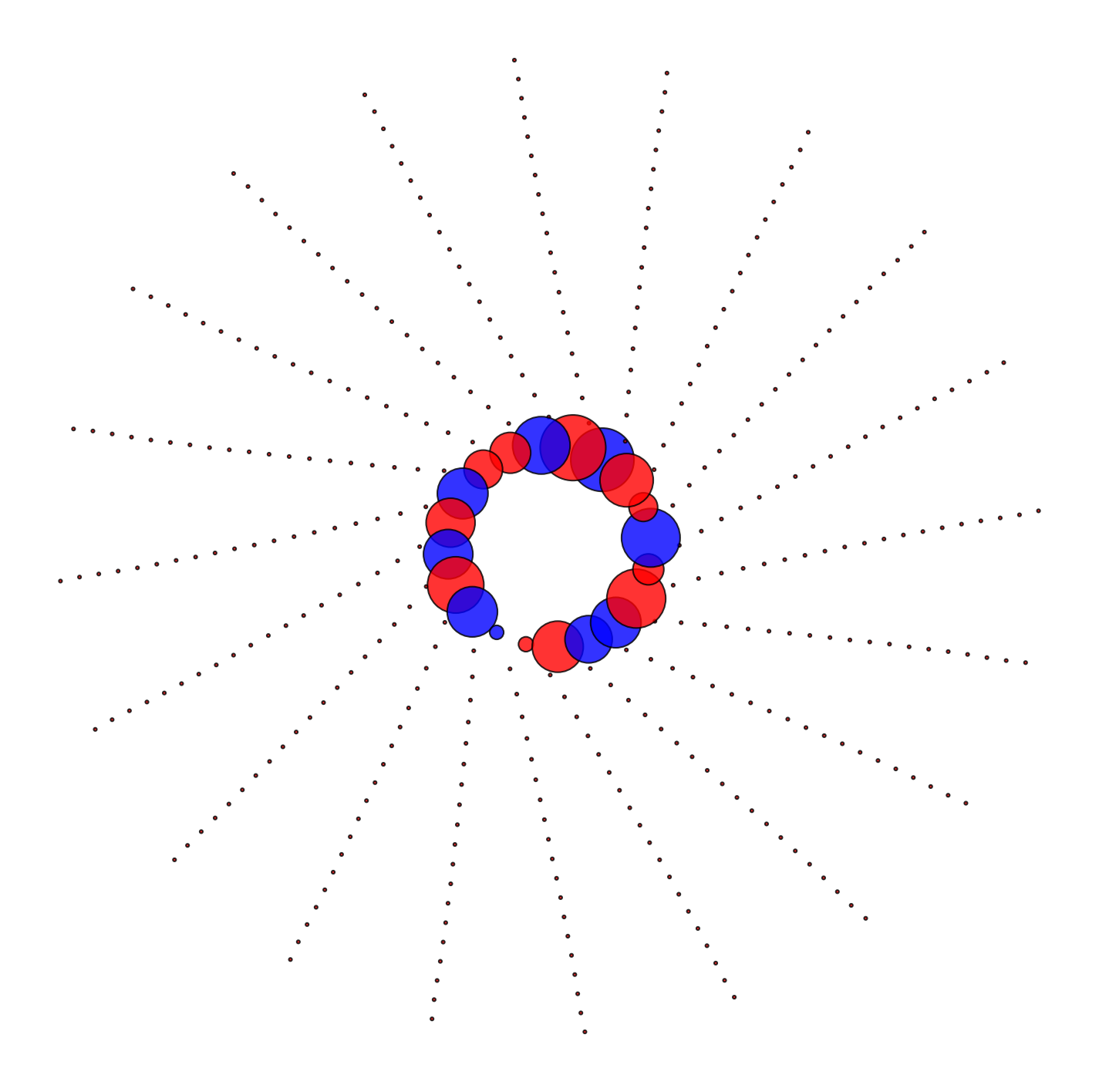}}
  \caption{(a) Schematic of synthetic graph. The green edges indicate those that are added to render the graph's adjacency matrix diagonalizable and invertible. (b) Synthetic graph and graph signal. Again, the green edges are the ones of the modified graph. (c) Proposed GHT applied to the graph signal of (b). (d) GHT using JNF applied to the graph without green edges.}
\end{figure} Let us recall that a graph $\mathcal G$ is $r$-acyclic if any collection of vertex-disjoint cycles of $\mathcal G$ covers at most $N-r$ nodes. A cycle cover is a collection of cycles of $\mathcal G$ that includes all nodes of $\mathcal G$. We also bring up the following definition and theorem \cite{li_connections_2023}.
\added{
\begin{definition} \cite{li_connections_2023} The set $S_{\mathcal{G}}$ of matrices associated to subgraphs of $\mathcal{G}$: 
    $S_{\mathcal{G}}=\{A_{sub} \text{ s.t }\mathcal{G}_{sub}=(V,E_{sub}) \text{ and } E_{sub} \subseteq E\}$.
\end{definition}}

\begin{theorem}\label{T1} \cite[Th.~4.4]{li_connections_2023}
    A graph $\mathcal{G}$ is $r$-acyclic if and only if every matrix $A_{sub}\in S_{\mathcal{G}}$ has at least $r$ zero eigenvalues. 
\end{theorem}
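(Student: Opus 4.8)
The plan is to translate the condition ``$A_{sub}$ has at least $r$ zero eigenvalues'' into ``$x^{r}$ divides the characteristic polynomial $\chi_{A_{sub}}(x)=\det(xI-A_{sub})$'', and then to exploit the classical digraph (Coates / Sachs-type) formula for the coefficients of the characteristic polynomial of an adjacency matrix. Recall that for a loopless digraph on $N$ vertices the coefficient of $x^{N-k}$ in $\chi_{A}(x)$ equals $\sum_{L:\,|V(L)|=k}(-1)^{c(L)}w(L)$, where the sum runs over all \emph{linear subgraphs} $L$ of the digraph --- collections of vertex-disjoint directed cycles --- that cover exactly $k$ vertices, $c(L)$ is the number of cycles in $L$, and $w(L)$ the product of the corresponding edge weights (here all equal to $1$). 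Consequently, $A$ has at least $r$ zero eigenvalues if and only if the coefficients of $x^{0},\dots,x^{r-1}$ all vanish, i.e.\ if and only if for every $k\in\{N-r+1,\dots,N\}$ the signed sum over linear subgraphs on $k$ vertices is $0$.

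For the forward implication I would first note that $r$-acyclicity is inherited by subgraphs: any collection of vertex-disjoint cycles of $\mathcal{G}_{sub}$ is also one of $\mathcal{G}$, hence covers at most $N-r$ nodes. Therefore, if $\mathcal{G}$ is $r$-acyclic, then for every $A_{sub}\in S_{\mathcal{G}}$ and every $k>N-r$ there is simply \emph{no} linear subgraph of $\mathcal{G}_{sub}$ on $k$ vertices, so the corresponding coefficients of $\chi_{A_{sub}}$ are empty sums and vanish identically. Hence $x^{r}\mid\chi_{A_{sub}}(x)$ and $A_{sub}$ has at least $r$ zero eigenvalues. This direction is immune to the usual cancellation issues precisely because the relevant index sets are empty.

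For the converse I would argue by contraposition. If $\mathcal{G}$ is not $r$-acyclic, choose a collection $L$ of vertex-disjoint directed cycles of $\mathcal{G}$ covering $m\ge N-r+1$ vertices, and let $A_{sub}\in S_{\mathcal{G}}$ be the adjacency matrix of the subgraph whose edge set is exactly that of $L$. Then $A_{sub}$ is, up to relabeling of vertices, a direct sum of cyclic permutation matrices of sizes $\ell_{1},\dots,\ell_{t}$ (with $\sum_{i}\ell_{i}=m$) together with a zero block of size $N-m$, so $\chi_{A_{sub}}(x)=x^{N-m}\prod_{i=1}^{t}(x^{\ell_{i}}-1)$. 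Since each factor $x^{\ell_{i}}-1$ is nonzero at $0$, the multiplicity of $0$ as an eigenvalue of $A_{sub}$ is exactly $N-m\le r-1<r$, so $A_{sub}$ witnesses the failure of the right-hand side. The main step to get right is the digraph coefficient formula above (the Sachs/Coates theorem for directed graphs); once it is in hand both directions are short, and there is no cancellation subtlety to fight, since in one direction the pertinent sums are empty and in the other we compute the spectrum of the chosen subgraph explicitly.
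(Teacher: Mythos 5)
The paper does not prove this statement at all: it is imported verbatim as \cite[Th.~4.4]{li\_connections\_2023}, so there is no in-paper argument to compare against. Judged on its own, your proof is correct and self-contained. The key ingredient is exactly the right one: the coefficient theorem for digraphs, which expresses the coefficient of $x^{N-k}$ in $\det(xI-A_{sub})$ as a signed sum over linear subgraphs (vertex-disjoint cycle collections) covering $k$ vertices. Your forward direction is clean because $r$-acyclicity passes to subgraphs and the relevant sums are then empty, so no cancellation argument is needed; your converse correctly exhibits an explicit witness $A_{sub}$ whose spectrum you compute in closed form as $x^{N-m}\prod_i(x^{\ell_i}-1)$, giving zero-eigenvalue multiplicity exactly $N-m<r$. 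Two small points worth making explicit if you write this up: (i) the notion of ``cycle'' in the $r$-acyclicity definition must match the one in the linear-subgraph expansion (i.e., directed cycles of any length, including loops and $2$-cycles if the graph admits them), otherwise the two sides of the equivalence count different objects; (ii) ``at least $r$ zero eigenvalues'' should be read as algebraic multiplicity, which is what divisibility of the characteristic polynomial by $x^r$ gives you, and which is also what the paper needs when it applies the contrapositive to the invertible matrix ${\bf A}'$ in Proposition~\ref{P1}.
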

\begin{prop} \label{P1}
$\mathcal G'$ admits a cycle cover; i.e., it can be decomposed into $M$ subcycles $\mathcal C_1, \ldots, \mathcal C_M$ such that $\bigcup_{m=1}^M \mathcal C_m=V$.
\end{prop}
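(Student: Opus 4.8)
The plan is to derive the cycle cover directly from the invertibility of $\mathbf{A}'$ by invoking Theorem~\ref{T1}. The pivotal observation is that the full adjacency matrix $\mathbf{A}'$ of $\mathcal{G}'$ is itself a member of $S_{\mathcal{G}'}$, namely the one obtained by taking $E_{sub}=E'$.

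First I would recall that the perturbation procedure of~\cite{seifert_digraph_2021} described above guarantees that $\mathbf{A}'$ is invertible, so $0$ is not an eigenvalue of $\mathbf{A}'$; that is, the number of zero eigenvalues of $\mathbf{A}'$ is $0$. Hence $S_{\mathcal{G}'}$ contains a matrix, namely $\mathbf{A}'$, with fewer than one zero eigenvalue. The ``only if'' direction of Theorem~\ref{T1}, applied with $r=1$, says that if $\mathcal{G}'$ were $1$-acyclic then \emph{every} $A_{sub}\in S_{\mathcal{G}'}$ would have at least one zero eigenvalue; contraposition therefore forces $\mathcal{G}'$ not to be $1$-acyclic. (More sharply, the equivalence in Theorem~\ref{T1} shows that the largest $r$ for which $\mathcal{G}'$ is $r$-acyclic equals the smallest number of zero eigenvalues occurring among the matrices of $S_{\mathcal{G}'}$, a minimum attained with value $0$ at $\mathbf{A}'$.)

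Next I would unwind the definition of acyclicity: $\mathcal{G}'$ is $1$-acyclic precisely when every collection of vertex-disjoint cycles of $\mathcal{G}'$ covers at most $N-1$ nodes. Since $\mathcal{G}'$ is not $1$-acyclic, there must exist a collection of vertex-disjoint cycles covering strictly more than $N-1$ nodes; as a family of vertex-disjoint cycles can never cover more than $N$ nodes, this collection covers all $N$ of them. Writing its members as $\mathcal{C}_1,\ldots,\mathcal{C}_M$, vertex-disjointness makes $\bigcup_{m=1}^M\mathcal{C}_m=V$ a genuine partition of the node set, which is exactly the claimed cycle cover.

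I do not anticipate a genuine obstacle: the invertibility of $\mathbf{A}'$ is already supplied by the construction, and the remainder is bookkeeping. The two points that need care are (i) stating the contrapositive of Theorem~\ref{T1} correctly, and (ii) observing that a family of vertex-disjoint cycles covers at most $N$ vertices, so that ``more than $N-1$'' is pushed up to ``exactly $N$''. One could additionally note that the same reasoning applied to an arbitrary $A_{sub}$ ties the maximal acyclicity of $\mathcal{G}'$ to a generic rank deficiency over edge-subsets, but this refinement is not needed for the Proposition.
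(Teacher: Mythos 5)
Your proposal is correct and follows essentially the same route as the paper: both apply Theorem~\ref{T1} to the invertible matrix $\mathbf{A}'\in S_{\mathcal{G}'}$ to rule out $r$-acyclicity for any $r\geq 1$, and then conclude that a vertex-disjoint cycle collection covers all $N$ nodes. Your write-up merely makes explicit the final unwinding of the definition of $0$-acyclicity into an actual cycle cover, which the paper leaves implicit.
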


\begin{proof}
Let us first show that the graph $\mathcal{G}'$ is $0$-acyclic. By contradiction, let us assume that our graph $\mathcal{G}'$ is $r$-acyclic with $r\neq0$. By using \added{Theorem.~\ref{T1}} in the direct sense, we have that every matrix $A_{sub}\in S_{\mathcal{G}'}$ has at least one zero eigenvalue. However, the associated adjacency matrix $A'\in  S_{\mathcal{G}'}$ has no zero eigenvalues due to invertibility. This leads to a contradiction and, therefore, $\mathcal{G}'$ can only be $0$-acyclic, consequently, $\mathcal{G}'$ admits a cycle cover.
\end{proof}

\begin{corollary}
Given graph signals ${\bf x}_m\in\mathbb{R}^N$ supported on subcycles $\mathcal C_m$, $m=1,\ldots,M$, (i.e., ${\bf x}_m[k]=0$ for $k\notin \mathcal{C}_m$), the following properties of the Hilbert amplitude and phase of the aggregated graph signal ${\bf x}=\sum_{m=1}^M {\bf x}_m$ hold:
\begin{align*}
    {{\mathcal A({\bf x})}[k]^2} &= \sum_{m=1}^M {\mathcal A({\bf x}_m)}[k]^2 \\
    &+ \left.\sum_{m\neq n}^M {\bf x}_m[k] {\bf x}_n[k] + {\mathcal H}({\bf x}_m)[k] {\mathcal H}({\bf x}_n)[k]\right., \\
{\bf {\varphi(x)}}[k] &= \arctan\left(\sum_{m=1}^M {\mathcal H}({\bf x}_m)[k], {\sum_{m=1}^M {\bf x}_m[k]}\right).
\end{align*}
\added{
\begin{proof}
    The proof of this Corollary can be found in the Supplementary Material.
\end{proof}}
\end{corollary}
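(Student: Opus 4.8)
The plan is to reduce the statement to the linearity of the GHT and a one-line expansion of a squared modulus; the cycle structure from Proposition~\ref{P1} only supplies the decomposition $\mathbf{x}=\sum_m\mathbf{x}_m$ and is not otherwise needed. First I would note that $\mathcal{H}=\mathbf{U}\hat{\mathbf{H}}\mathbf{U}^{-1}$ is linear, so $\mathcal{H}(\mathbf{x})=\sum_{m=1}^M\mathcal{H}(\mathbf{x}_m)$ and the analytical graph signal splits as $\tilde{\mathbf{x}}=\sum_{m=1}^M\tilde{\mathbf{x}}_m$ with $\tilde{\mathbf{x}}_m=\mathbf{x}_m+j\,\mathcal{H}(\mathbf{x}_m)$. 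Each $\mathbf{x}_m$ is real-valued, so the same conjugate-symmetry argument used above for $\mathbf{x}$ applies to $\mathbf{x}_m$ alone and gives $\mathcal{H}(\mathbf{x}_m)\in\mathbb{R}^N$. Hence $\operatorname{Re}\tilde{\mathbf{x}}_m[k]=\mathbf{x}_m[k]$ and $\operatorname{Im}\tilde{\mathbf{x}}_m[k]=\mathcal{H}(\mathbf{x}_m)[k]$, and summing over $m$ yields $\operatorname{Re}\tilde{\mathbf{x}}[k]=\sum_m\mathbf{x}_m[k]$, $\operatorname{Im}\tilde{\mathbf{x}}[k]=\sum_m\mathcal{H}(\mathbf{x}_m)[k]$.

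The phase identity is then immediate from the definition $\varphi(\mathbf{x})[k]=\arctan\!\big(\operatorname{Im}\tilde{\mathbf{x}}[k],\operatorname{Re}\tilde{\mathbf{x}}[k]\big)$ by substituting the two sums just obtained. For the amplitude I would write $\mathcal{A}(\mathbf{x})[k]^2=(\operatorname{Re}\tilde{\mathbf{x}}[k])^2+(\operatorname{Im}\tilde{\mathbf{x}}[k])^2=\big(\sum_m\mathbf{x}_m[k]\big)^2+\big(\sum_m\mathcal{H}(\mathbf{x}_m)[k]\big)^2$, expand each square as a double sum over $(m,n)$, and separate the diagonal $m=n$ from the off-diagonal terms. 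The diagonal contribution is $\sum_m\big(\mathbf{x}_m[k]^2+\mathcal{H}(\mathbf{x}_m)[k]^2\big)=\sum_m\mathcal{A}(\mathbf{x}_m)[k]^2$, and the off-diagonal contribution is precisely $\sum_{m\neq n}\big(\mathbf{x}_m[k]\mathbf{x}_n[k]+\mathcal{H}(\mathbf{x}_m)[k]\mathcal{H}(\mathbf{x}_n)[k]\big)$, matching the statement.

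There is no substantial obstacle: once linearity is invoked the argument is bookkeeping. The only points worth a careful sentence are (i) that realness of $\mathcal{H}(\mathbf{x}_m)$ is needed per component, not merely for the aggregate $\mathbf{x}$, which is what licenses the term-by-term identification of real and imaginary parts, and (ii) that the two identities in fact hold for any decomposition of $\mathbf{x}$ into real signals — the support hypothesis $\mathbf{x}_m[k]=0$ for $k\notin\mathcal{C}_m$ is never used in the algebra and only serves the cycle-wise interpretation, since $\mathcal{H}(\mathbf{x}_m)$ in general has support outside $\mathcal{C}_m$.
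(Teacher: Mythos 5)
Your proof is correct: linearity of $\mathcal{H}=\mathbf{U}\hat{\mathbf{H}}\mathbf{U}^{-1}$, realness of each $\mathcal{H}(\mathbf{x}_m)$ (which, as you rightly note, must be established per component from the conjugate-symmetry of each real $\mathbf{x}_m$'s GFT coefficients, not merely for the aggregate), and the diagonal/off-diagonal split of the expanded squared modulus yield exactly the two stated identities. The paper defers its own proof to supplementary material not reproduced in the source, but the argument is essentially forced --- linearity plus a one-line expansion --- so your derivation coincides with the intended one, and your observation that the support hypothesis $\mathbf{x}_m[k]=0$ for $k\notin\mathcal{C}_m$ is never used in the algebra (serving only the cycle-wise interpretation, since $\mathcal{H}(\mathbf{x}_m)$ generally leaks outside $\mathcal{C}_m$) is accurate and worth keeping.
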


These properties explain how the signals from different cycles combine for overlapping nodes; i.e., nodes of overlapping subcycles. For a node $k$ that exclusively belongs to one subcycle $\mathcal C_m$, we have ${\bf x}[k]={\bf x}_m[k]$ and thus the equations revert to the conventional interpretation of instantaneous amplitude and phase in that cycle. For nodes of overlapping subcycles, the strengths of the contributing signals ${\bf x}_m$ will determine their contribution. 

\section{Experimental Results}
\subsection{Setup}

\begin{figure}
    \centering
    \captionsetup{justification=centering}

    \subfloat[Per fan inst.\ amplitude\label{2e}]{%
       \includegraphics[width=0.4\linewidth]{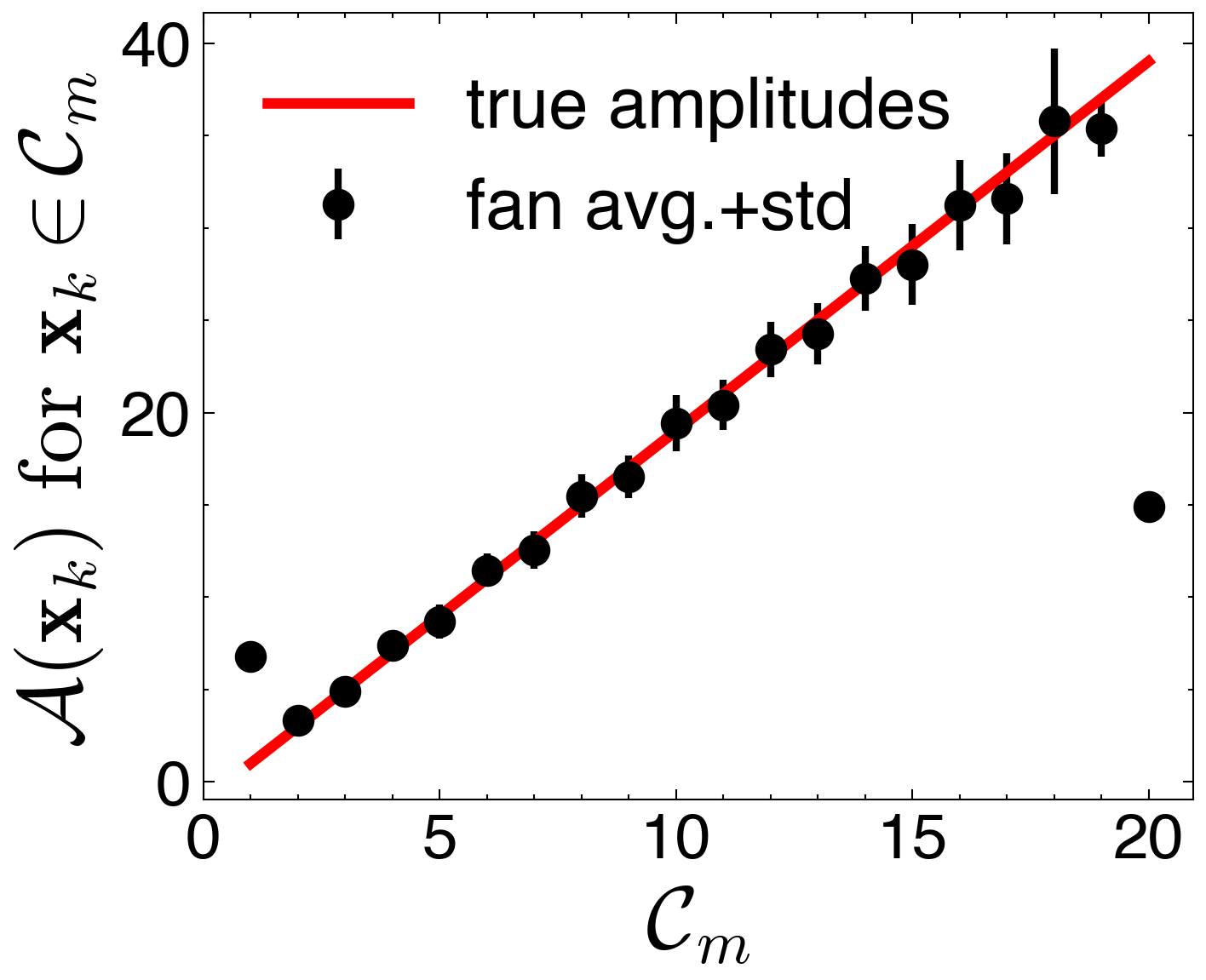}}
  \subfloat[Per fan inst.\ frequency\label{2f}]{%
\includegraphics[width=0.39\linewidth]{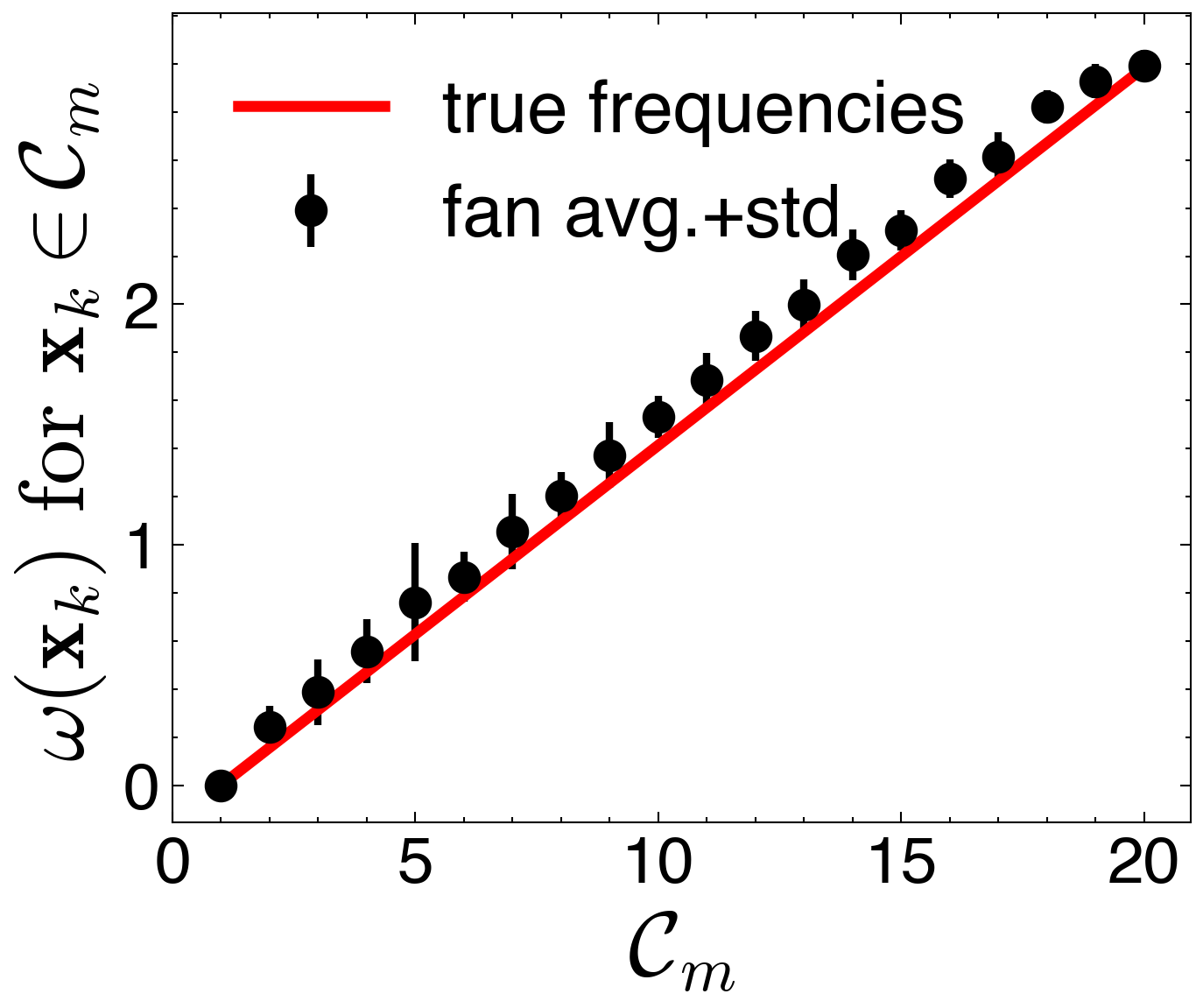}}\\
          \subfloat[Central cycle\label{2d}]{%
        \includegraphics[width=0.9\linewidth]{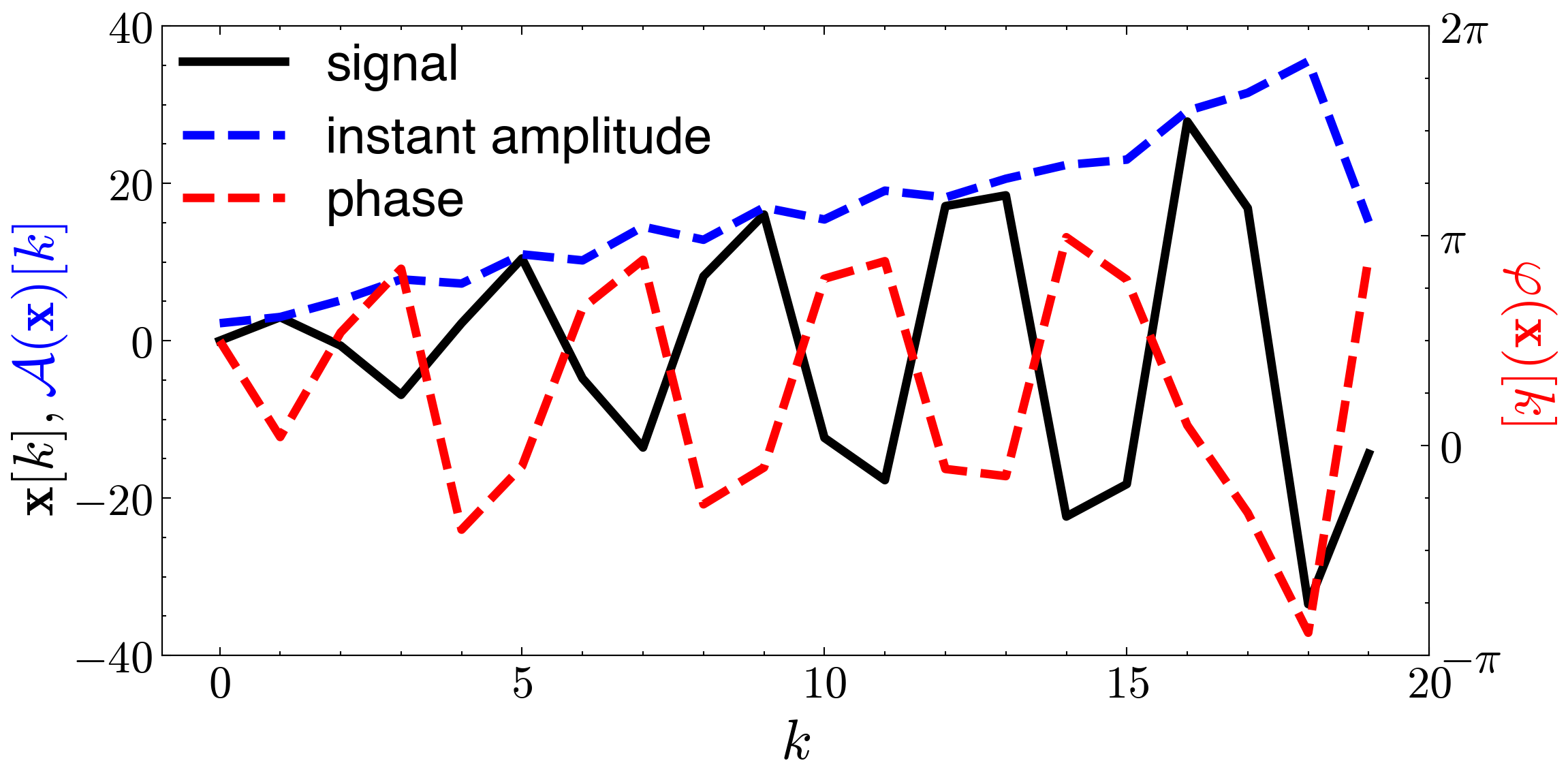}}
  \caption{(a) Instantaneous amplitude distribution across fans $\mathcal{C}_m$. (b) Instantaneous frequency distribution across fans $\mathcal{C}_m$. (c) Graph signal, instantaneous amplitude, and phase on the central cycle.}
\end{figure}

\subsubsection{Synthetic Data} The structure of our first graph is illustrated in Fig.~\ref{2a} consisting of a central directed cycle graph with $N_C$ nodes that each have an outgoing fan containing $N_F$ nodes, for a the total number of nodes of $N_C N_F$. We associate to each outgoing fan $\mathcal{C}_m$, $m=1,\ldots,N_F$ sinusoidal signals:
\begin{equation*}
   {\bf x}_m[k]=(2m+1) \sin\left(\frac{m+1}{2N_f}2\pi k + \frac{5N_c+4}{20N_c}2\pi m\right), 
\end{equation*}
which manifest increasing frequency and amplitude on the fans, and at the same time a signal with fixed frequency on the central cycle.

\begin{figure*}
    \centering
    \captionsetup{justification=centering}
    \subfloat[Manhattan graph \label{3a}]{%
       \includegraphics[width=0.17\linewidth]{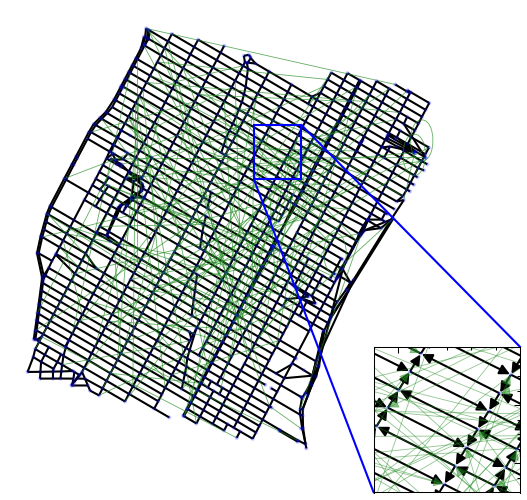}}
    \subfloat[Regular 2D grid\label{3b}]{%
    \includegraphics[width=0.17\linewidth]{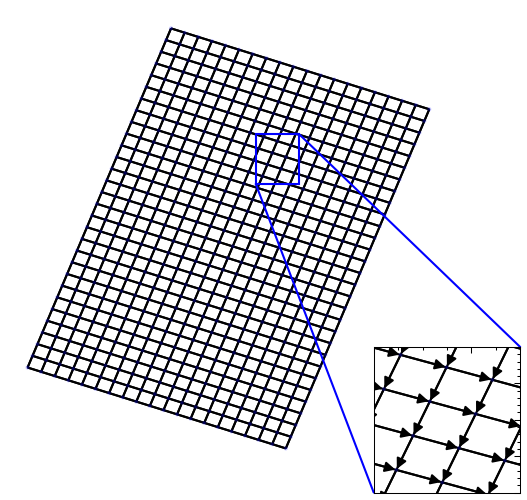}}    
    \subfloat[Graph signal on (b) \label{4d}]{%
       \includegraphics[width=0.16\linewidth]{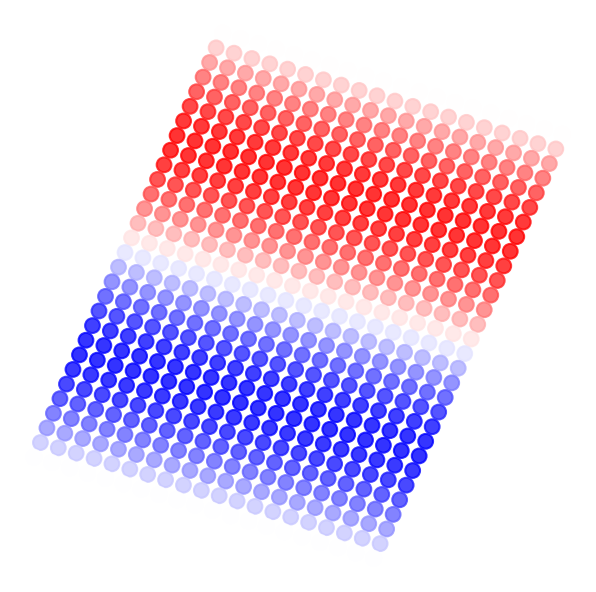}}
    \subfloat[GHT of (c) \label{4b}]{%
       \includegraphics[width=0.16\linewidth]{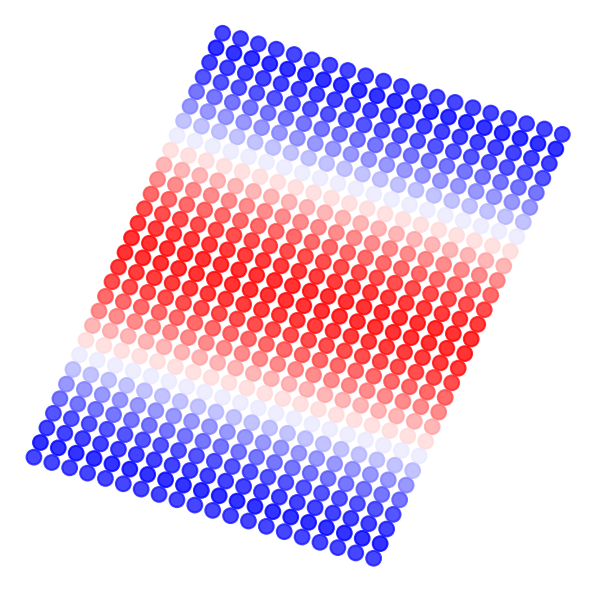}}
  \subfloat[Graph signal on (a)\label{4a}]{%
       \includegraphics[height=2.8cm,width=0.16\linewidth]{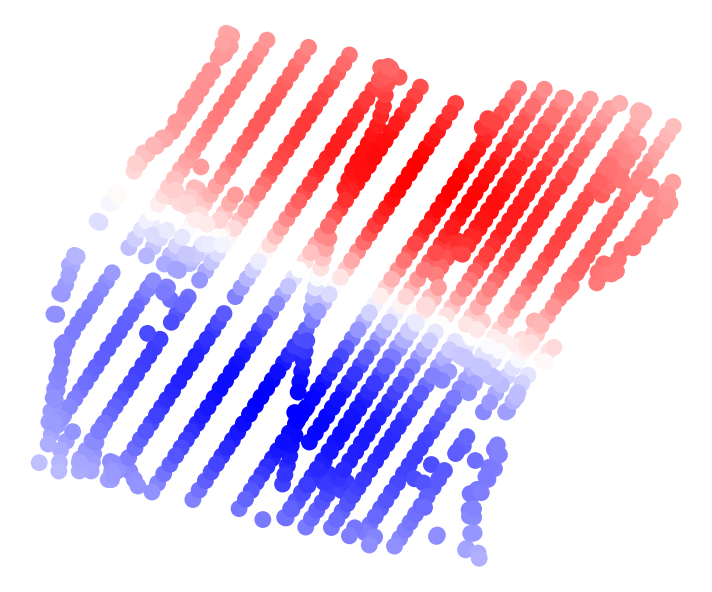}}
    \subfloat[GHT of (e)\label{4c}]{%
       \includegraphics[height=2.8cm,width=0.16\linewidth]{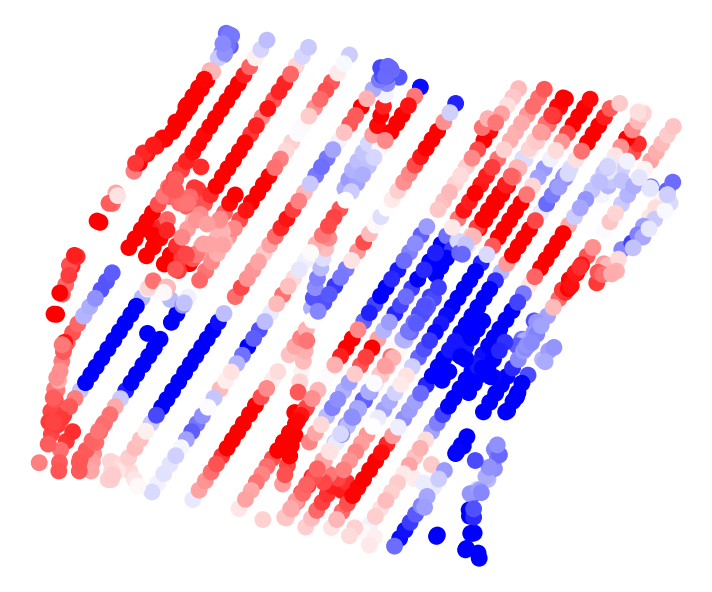}}
  \caption{\label{fig4} (a) Manhattan graph. The green edges indicate those that are added by the perturbation. (b) Regular 2D grid with periodic boundary conditions along obliques. (c,e) Graph signal is a planar wave oriented along the same direction as avenues in Fig.~\ref{4a}. (d) On the regular 2D grid, the GHT of the graph signal shows a clear $\pi/2$ phase shift along the propagation direction. (f) On the Manhattan midtown graph, the GHT of the graph signal reveals the effect of the underlying directed graph in phase shifting.}
\end{figure*}

\subsubsection{Experimental Data} From the Manhattan graph \cite{domingos_graph_2020}, we take the midtown subgraph composed of many one-way avenues and streets (in both directions), see Fig.~\ref{3a}. We also created a graph from a regular 2D grid of similar size where the nodes are connected from top-to-bottom and left-to-right with periodic boundary conditions, see Fig.~\ref{3b}. As a graph signal, we created a planar wave pattern along the direction of the avenues (i.e., approximately north-south). The wave is that of a sinusoid and the frequency of the wave is chosen such that one period covers the midtown area, this yields a sinusoidal plane wave. The graph signal is illustrated in Fig.~\ref{4d},~\ref{4a} on the regular 2D grid and midtown Manhattan graph, respectively.
The chosen graphs and signals reflect domains and monogenic signals where traditional HT is commonly applied: Fig.~\ref{2a} evokes temporal applications with overlapping nodes, Fig.~\ref{3b} recalls image-based uses via a lattice structure, and the Manhattan graph Fig.~\ref{3a} combines grid-like topology with real graph directionality.
\subsection{Results}

\subsubsection{Synthetic Data} 
We instantiate the graph of Fig.~\ref{2a} for $N_C=20$ fans of length $N_F=20$. The adjacency matrix is neither diagonizable nor invertible, with the Jordan block decomposition providing $N_C$ blocks with eigenvalues zero. Removing the non-trivial Jordan blocks leads to the directed rosace graph as shown schematically in Fig.~\ref{2b}, turning fans into subcycles linked to the central part.  

Now diagonalizable, we can apply the proposed GHT to the signal in Fig.~\ref{2b}. This leads to the expected phase shift on the fans (Fig.~\ref{cyclic-ght}). However, when applying the GHT based on the JNF of the original graph~\cite{venkitaraman_hilbert_2019}, we notice that the operation cancels the signal on the fans and only provides a phase shift on the central cycle (Fig.~\ref{jordan-ght}).

We further computed the instantaneous amplitude and phase using the proposed GHT. In addition, the instantaneous phase was converted into instantaneous frequency by computing the phase-unwrapped derivative along the fan as follows
\begin{equation}
  \omega({\bf x})[k] = \left( \varphi({\bf x})[k] - \varphi({\bf x})[k_{+1}] \right)+2l\pi,
\end{equation}
where $k_{+1}$ indicates the next node on the fan and $l$ is the integer that turns $\omega$ in the interval $]-\pi,\pi]$ following Itoh's unwrapping method~\cite{itoh_analysis_1982}. 
As shown in Figs. \ref{2e} and \ref{2f}, the average amplitude and frequency per fan are accurate estimates of the ground truth expressed in the synthetic graph signal. In addition, the instantaneous amplitude on the central cycle recovers the envelope of the oscillating and increasing signal, and the phase varies constantly since the frequency is fixed on the central cycle (Fig. \ref{2d}).

This example perfectly illustrates Proposition \ref{P1}. First, the GHT is fully in line with the classical Hilbert transform on the fans that form non-overlapping cycles, and thus amplitude, phase, and frequency are perfectly recovered. Second, the GHT has the capability to provide similar properties for an overlapping cycle (i.e., the central cycle of this graph). It is clear that the graph signal expressed on the central cycle is ``compatible'' with the one on the fans. Nevertheless, instantaneous amplitude and phase are in accordance with the meaning of the conventional Hilbert transform.

\subsubsection{Experimental Data} 
The adjustment of the adjacency matrix of the Manhattan midtown graph leads to $252$ extra edges (compared to the $1970$ original ones), indicated in green in Fig.~\ref{3a}. The graph based on the regular 2D grid does not require any adjustment (Fig.~\ref{3b}). First, we look at the effect of the GHT on the planar wave graph signal expressed on the regular grid (Fig.~\ref{4b}). We observe a phase shift of $\pi/2$ along the propagation direction, turning the sine into a cosine. Second, the GHT of the same signal expressed on the Manhattan midtown graph produces a complicated scattering of wave (Fig.~\ref{4c}) due to the intricate pattern of directed edges (by one-way streets and avenues crossing, see Fig.~\ref{3a}). We can recognize different patterns in the east and west of the midtown. The implementation is publicly available\footnote{Repository:  \url{https://github.com/MIPLabCH/Graph-Hilbert-Transform}}. 

\section{Discussion \& Outlook}
While the Hilbert transform has been extended previously to the directed graph setting \cite{venkitaraman_hilbert_2019} through the JNF, we highlight here that it is key to first have the eigendecomposition of the GSO to be well-defined; i.e., diagonalizable and invertible, and, consequently, to have a cycle cover in order to enable phase analysis across the graph. In addition, we provide an interpretation of the GHT by constructing the graph signal from contributions on subcycles. This leads to a useful intuition on how the GHT combines information from different cycles. Our first example illustrates how the GHT matches the conventional Hilbert transform on non-overlapping cycles, and can remain interpretable on overlapping ones. \deleted{For general graphs, such as the Manhattan midtown graph, all nodes are overlapping nodes and directionality is complex.} \replaced{O}{Interestingly, o}{n a graph that corresponds to a regular 2-D grid, the GHT behaves as expected from monogenic signal analysis \cite{felsberg_monogenic_2001, unser_monogenic_2008}; that is, the phase shift of the Hilbert transform occurs along the propagation direction of the planar wave.  However, when applying the GHT on the \added{more general Manhattan }midtown graph \added{where the distribution of directions is non-uniform}, the complexity of the graph structure is reflected in an intricate pattern that is phase-shifted differently in different parts. 

In terms of signal representation for GHT, if priors on subcycles' signal are unavailable, we propose that representation is best achieved through decomposition into a weighted sum of GSO eigenvectors, with cycle decompositions ensuring near-linear instantaneous phase for each eigenvectors, akin to all Fourier bases being of linear phase \cite{oppenheim_signals_1997}. While such decomposition is natural, further study is needed and we believe this merits separate work along a focused investigation on real-world applications.

In this work, we used the adjacency matrix as the GSO. However, the directed Laplacian \cite{singh_graph_2016} is also a valid choice and in particular other novel decomposition could allow for phase definitions, such as the use of the polar decomposition, by looking into the introduced unitary matrix $\bf Q$ \cite{kwak_frequency_2024} and the Schur decomposition which exhibits real and imaginary valued blocks in its upper quasi-triangular matrix \cite{xiao_joint_2023}.

Future work includes extending the known applications of the HT \cite{taner_complex_1979,bruns_fourier-_2004,proakis_digital_2008} to directed graphs. \added{In particular the GHT can analyze amplitude or frequency modulation in graph signals, such as synchrony of wave-like patterns encountered in neuronal recordings on irregular domains~\cite{le_van_quyen_comparison_2001}.}
\ifCLASSOPTIONcaptionsoff
  \newpage
\fi

\bibliographystyle{ieeetr}

\bibliography{references}

\begin{thebibliography}{10}

\bibitem{shuman_emerging_2013}
D.~I. Shuman, S.~K. Narang, P.~Frossard, A.~Ortega, and P.~Vandergheynst, ``The
  emerging field of signal processing on graphs: {Extending} high-dimensional
  data analysis to networks and other irregular domains,'' {\em IEEE Signal
  Processing Magazine}, vol.~30, pp.~83--98, May 2013.

\bibitem{ortega_graph_2018}
A.~Ortega, P.~Frossard, J.~Kovačević, J.~M.~F. Moura, and P.~Vandergheynst,
  ``Graph {Signal} {Processing}: {Overview}, {Challenges}, and
  {Applications},'' {\em Proceedings of the IEEE}, vol.~106, pp.~808--828, May
  2018.

\bibitem{marques_signal_2020}
A.~G. Marques, S.~Segarra, and G.~Mateos, ``Signal {Processing} on {Directed}
  {Graphs}: {The} {Role} of {Edge} {Directionality} {When} {Processing} and
  {Learning} {From} {Network} {Data},'' {\em IEEE Signal Processing Magazine},
  vol.~37, pp.~99--116, Nov. 2020.

\bibitem{sandryhaila_discrete_2013}
A.~Sandryhaila and J.~M.~F. Moura, ``Discrete {Signal} {Processing} on
  {Graphs},'' {\em IEEE Transactions on Signal Processing}, vol.~61,
  pp.~1644--1656, Apr. 2013.

\bibitem{sandryhaila_big_2014}
A.~Sandryhaila and J.~M. Moura, ``Big {Data} {Analysis} with {Signal}
  {Processing} on {Graphs}: {Representation} and processing of massive data
  sets with irregular structure,'' {\em IEEE Signal Processing Magazine},
  vol.~31, pp.~80--90, Sept. 2014.

\bibitem{chen_signal_2014}
S.~Chen, A.~Sandryhaila, J.~M.~F. Moura, and J.~Kovacevic, ``Signal denoising
  on graphs via graph filtering,'' in {\em 2014 {IEEE} {Global} {Conference} on
  {Signal} and {Information} {Processing} ({GlobalSIP})}, (Atlanta, GA, USA),
  pp.~872--876, IEEE, Dec. 2014.

\bibitem{xuesong_shi_infinite_2015}
{Xuesong Shi}, {Hui Feng}, {Muyuan Zhai}, {Tao Yang}, and {Bo Hu}, ``Infinite
  {Impulse} {Response} {Graph} {Filters} in {Wireless} {Sensor} {Networks},''
  {\em IEEE Signal Processing Letters}, vol.~22, pp.~1113--1117, Aug. 2015.

\bibitem{sandryhaila_discrete_2014}
A.~Sandryhaila and J.~M.~F. Moura, ``Discrete {Signal} {Processing} on
  {Graphs}: {Frequency} {Analysis},'' {\em IEEE Transactions on Signal
  Processing}, vol.~62, pp.~3042--3054, June 2014.

\bibitem{cox_computational_1990}
T.~Beelen and P.~V. Dooren, ``Computational aspects of the {Jordan} canonical
  form,'' in {\em Reliable {Numerical} {Commputation}} (M.~G. Cox and
  S.~Hammarling, eds.), pp.~57--72, Oxford University PressOxford, Sept. 1990.

\bibitem{shi_graph_2019}
J.~Shi and J.~M.~F. Moura, ``Graph {Signal} {Processing}: {Modulation},
  {Convolution}, and {Sampling},'' Dec. 2019.
\newblock arXiv:1912.06762 [eess].

\bibitem{sardellitti_graph_2017}
S.~Sardellitti, S.~Barbarossa, and P.~Di~Lorenzo, ``Graph {Fourier} {Transform}
  for directed graphs based on {Lovász} extension of min-cut,'' in {\em 2017
  {IEEE} {International} {Conference} on {Acoustics}, {Speech} and {Signal}
  {Processing} ({ICASSP})}, (New Orleans, LA), pp.~3894--3898, IEEE, Mar. 2017.

\bibitem{shafipour_directed_2019}
R.~Shafipour, A.~Khodabakhsh, G.~Mateos, and E.~Nikolova, ``A {Directed}
  {Graph} {Fourier} {Transform} {With} {Spread} {Frequency} {Components},''
  {\em IEEE Transactions on Signal Processing}, vol.~67, pp.~946--960, Feb.
  2019.

\bibitem{brefeld_graph_2020}
S.~Furutani, T.~Shibahara, M.~Akiyama, K.~Hato, and M.~Aida, ``Graph {Signal}
  {Processing} for {Directed} {Graphs} {Based} on the {Hermitian}
  {Laplacian},'' in {\em Machine {Learning} and {Knowledge} {Discovery} in
  {Databases}} (U.~Brefeld, E.~Fromont, A.~Hotho, A.~Knobbe, M.~Maathuis, and
  C.~Robardet, eds.), vol.~11906, pp.~447--463, Cham: Springer International
  Publishing, 2020.
\newblock Series Title: Lecture Notes in Computer Science.

\bibitem{seifert_digraph_2021}
B.~Seifert and M.~Puschel, ``Digraph {Signal} {Processing} {With} {Generalized}
  {Boundary} {Conditions},'' {\em IEEE Transactions on Signal Processing},
  vol.~69, pp.~1422--1437, 2021.

\bibitem{moro_low_2003}
J.~Moro and F.~M. Dopico, ``Low {Rank} {Perturbation} of {Jordan}
  {Structure},'' {\em SIAM Journal on Matrix Analysis and Applications},
  vol.~25, pp.~495--506, Jan. 2003.

\bibitem{savchenko_change_2004}
S.~V. Savchenko, ``On the {Change} in the {Spectral} {Properties} of a {Matrix}
  under {Perturbations} of {Sufficiently} {Low} {Rank},'' {\em Functional
  Analysis and Its Applications}, vol.~38, pp.~69--71, Jan. 2004.

\bibitem{horn_matrix_1985}
R.~A. Horn and C.~R. Johnson, {\em Matrix {Analysis}}.
\newblock Cambridge University Press, 1~ed., Dec. 1985.

\bibitem{venkitaraman_hilbert_2019}
A.~Venkitaraman, S.~Chatterjee, and P.~Händel, ``On {Hilbert} transform,
  analytic signal, and modulation analysis for signals over graphs,'' {\em
  Signal Processing}, vol.~156, pp.~106--115, Mar. 2019.

\bibitem{hahn_hilbert_1996}
S.~L. Hahn, {\em Hilbert transforms in signal processing}.
\newblock Artech {House} signal processing library, Boston: Artech House, 1996.

\bibitem{proakis_digital_2008}
J.~G. Proakis and M.~Salehi, {\em Digital communications}.
\newblock Boston, Mass.: McGraw-Hill, 5. ed~ed., 2008.

\bibitem{li_connections_2023}
Y.~Li, Y.~Qiao, A.~Wigderson, Y.~Wigderson, and C.~Zhang, ``Connections between
  graphs and matrix spaces,'' {\em Israel Journal of Mathematics}, vol.~256,
  pp.~513--580, Sept. 2023.

\bibitem{domingos_graph_2020}
J.~Domingos and J.~M.~F. Moura, ``Graph {Fourier} {Transform}: {A} {Stable}
  {Approximation},'' {\em IEEE Transactions on Signal Processing}, vol.~68,
  pp.~4422--4437, 2020.

\bibitem{itoh_analysis_1982}
K.~Itoh, ``Analysis of the phase unwrapping algorithm,'' {\em Applied Optics},
  vol.~21, p.~2470, July 1982.

\bibitem{felsberg_monogenic_2001}
M.~Felsberg and G.~Sommer, ``The monogenic signal,'' {\em IEEE Transactions on
  Signal Processing}, vol.~49, pp.~3136--3144, Dec. 2001.

\bibitem{unser_monogenic_2008}
M.~Unser, K.~Balac, and D.~V.~D. Ville, ``The monogenic {Riesz}-{Laplace}
  wavelet transform,'' in {\em 2008 16th {European} {Signal} {Processing}
  {Conference} ({EUSIPCO})}, (Lausanne, Switzerland), IEEE, 2008.

\bibitem{oppenheim_signals_1997}
A.~V. Oppenheim, A.~S. Willsky, and S.~H. Nawab, {\em Signals \& systems}.
\newblock Prentice-{Hall} signal processing series, Upper Saddle River, N.J:
  Prentice Hall, 2nd ed~ed., 1997.

\bibitem{singh_graph_2016}
R.~Singh, A.~Chakraborty, and B.~S. Manoj, ``Graph {Fourier} transform based on
  directed {Laplacian},'' in {\em 2016 {International} {Conference} on {Signal}
  {Processing} and {Communications} ({SPCOM})}, (Bangalore, India), pp.~1--5,
  IEEE, June 2016.

\bibitem{kwak_frequency_2024}
S.~Kwak, L.~Shimabukuro, and A.~Ortega, ``Frequency {Analysis} and {Filter}
  {Design} for {Directed} {Graphs} with {Polar} {Decomposition},'' in {\em
  {ICASSP} 2024 - 2024 {IEEE} {International} {Conference} on {Acoustics},
  {Speech} and {Signal} {Processing} ({ICASSP})}, (Seoul, Korea, Republic of),
  pp.~9661--9665, IEEE, Apr. 2024.

\bibitem{xiao_joint_2023}
Z.~Xiao, H.~Fang, S.~Tomasin, G.~Mateos, and X.~Wang, ``Joint {Sampling} and
  {Reconstruction} of {Time}-{Varying} {Signals} {Over} {Directed} {Graphs},''
  {\em IEEE Transactions on Signal Processing}, vol.~71, pp.~2204--2219, 2023.

\bibitem{taner_complex_1979}
M.~T. Taner, F.~Koehler, and R.~E. Sheriff, ``Complex seismic trace analysis,''
  {\em GEOPHYSICS}, vol.~44, pp.~1041--1063, June 1979.

\bibitem{bruns_fourier-_2004}
A.~Bruns, ``Fourier-, {Hilbert}- and wavelet-based signal analysis: are they
  really different approaches?,'' {\em Journal of Neuroscience Methods},
  vol.~137, pp.~321--332, Aug. 2004.

\bibitem{le_van_quyen_comparison_2001}
M.~Le~Van~Quyen, J.~Foucher, J.-P. Lachaux, E.~Rodriguez, A.~Lutz,
  J.~Martinerie, and F.~J. Varela, ``Comparison of {Hilbert} transform and
  wavelet methods for the analysis of neuronal synchrony,'' {\em Journal of
  Neuroscience Methods}, vol.~111, pp.~83--98, Sept. 2001.

\end{thebibliography}

\end{document}